\renewcommand{\algocf@captiontext}[2]{#1\algocf@typo. \AlCapFnt{}#2} % text of caption
\def\@algocf@capt@plain{top}
\renewcommand{\algocf@makecaption}[2]{%
  \addtolength{\hsize}{\algomargin}%
  \sbox\@tempboxa{\algocf@captiontext{#1}{#2}}%
  \ifdim\wd\@tempboxa >\hsize%     % if caption is longer than a line
    \hskip .5\algomargin%
    \parbox[t]{\hsize}{\algocf@captiontext{#1}{#2}}% then caption is not centered
  \else%
    \global\@minipagefalse%
    \hbox to\hsize{\box\@tempboxa}% else caption is centered
  \fi%
  \addtolength{\hsize}{-\algomargin}%
}
\def\RD{\textsc{RD}}
\def\RR{\textsc{RR}}
\def\SRR{\textsc{SRR}}
\def\CRR{\textsc{CRR}}
\def\ind{\begin{picture}(9,8)
         \put(0,0){\line(1,0){9}}
         \put(3,0){\line(0,1){8}}
         \put(6,0){\line(0,1){8}}
         \end{picture}
        }
 \def\pr{\text{pr}}
\begin{document}

%\jname{Biometrika}
%%% The year, volume, and number are determined on publication
\jyear{2014}
\jvol{101}
\jnum{1}
%% The \doi{...} and \accessdate commands are used by the production team
%\doi{10.1093/biomet/asm023}
%\accessdate{Advance Access publication on ???}
\copyrightinfo{\Copyright\ 2014 Biometrika Trust\goodbreak {\em Printed in Great Britain}}

%% These dates are usually set by the production team
%\received{September 2013}
%\revised{April 2014}

%% The left and right page headers are defined here:
\markboth{Peng Ding \and Tyler J. VanderWeele}{Generalized Cornfield conditions}

%% Here are the title, author names and addresses
\title{Generalized Cornfield conditions for the risk difference}

\author{Peng Ding}
\affil{Department of Statistics, Harvard University, One Oxford Street, Cambridge, Massachusetts, 02138, U.S.A.  \email{pengding@fas.harvard.edu}}

\author{\and Tyler J. VanderWeele}
\affil{Departments of Biostatistics and Epidemiology, Harvard School of Public Health, Boston,
Massachusetts 02115, U.S.A. \email{tvanderw@hsph.harvard.edu}}

\maketitle

\begin{abstract}
A central question in causal inference with observational studies is the sensitivity of conclusions to unmeasured confounding. The classical Cornfield condition allows us to assess whether an unmeasured binary confounder can explain away the observed relative risk of the exposure on the outcome. 
It states that for an unmeasured confounder to explain away an observed relative risk, the association between the unmeasured confounder and the exposure, and also that between the unmeasured confounder and the outcome, must both be larger than the observed relative risk. In this paper, we extend the classical Cornfield condition in three directions. 
First, we consider analogous conditions for the risk difference, and allow for a categorical, not just a binary, unmeasured confounder. 
Second, we provide more stringent thresholds which the maximum of the above-mentioned associations must satisfy, rather than simply weaker conditions that both must satisfy. 
Third, we show that all previous results on Cornfield conditions hold under weaker assumptions than previously used.
We illustrate their potential applications by real examples, where our new conditions give more information than the classical ones. 
\end{abstract}

\begin{keywords}
Causal inference;
Confounding;
Observational study;
Sensitivity analysis.
\end{keywords}

\section{Causation, Confounding and Cornfield Question}

Causal inference in observational studies is often jeopardized by unmeasured confounding. 
For example, it can be the case that the crude association between the exposure and the outcome is positive, but their association is negative within each stratum of a confounder.  
This phenomenon is sometimes referred to as the Yule--Simpson paradox, and within the context of causal inference is referred to as confounding.
For instance, the association between cigarette smoking and lung cancer was historically accounted for by two competing theories: cigarette smoking has a causal effect on lung cancer; or cigarette smoking is not causative for lung cancer, and their crude association is purely due to a common genetic cause that influences both of them. 
R. A. Fisher was a major proponent of the second theory, viewing the Yule--Simpson paradox as an Achilles' heel of causal inference in observational studies \citep{Fisher::1957}.    
Instead of taking a completely dismissive view on observational studies, \citet{Cornfield::1959} asked the following question regarding the common cause theory:
How strong should the unmeasured confounder be, in order to explain away the association between the exposure and outcome?

If the strength of the unmeasured confounder required to explain away the association turns out to be too large to make biological or practical sense, then the association between the exposure and the outcome cannot be explained only by a common cause, and the exposure-response relationship itself must be causal. 
\citet{Cornfield::1959} settled the causal role of smoking on lung cancer using this approach, and their seminal work helped initiate the entire field of sensitivity analysis.
Here, we extend \citet{Cornfield::1959}'s work to the risk difference scale, to non-binary confounders, to stronger thresholds, and to weaker assumptions.

\section{Notation and Definitions}

Assume we have a binary exposure $E$, a binary outcome $D$, and a categorical unmeasured confounder $U$.
The discussion throughout the paper will make reference to the ignorable treatment assignment assumption $E \ind \{ D(1), D(0)\}\mid U$ \citep{Rosenbaum::1983}; where $D_i(1)$ and $D_i(0)$ denote the potential outcomes for individual $i$ with and without exposure, respectively, and $X\ind Y\mid Z$ is used to denote that $X$ is independent of $Y$ conditional on $Z$.
This is also sometimes referred to as the assumption that the effect of $E$ on $D$ is unconfounded conditional on $U.$
In order to answer the Cornfield question, we assume $E$ has no effect on $D$, and consider how large the associations between $U$ and $E$ and between $U$ and $D$ would have to be to explain away the observed crude association between $E$ and $D$.
There are three possible formulations about what might be meant by no effect of $E$ on $D$, and we will present them from weakest to strongest.

\begin{assumption}
%[Average null] 
The average causal effect of $E$ on $D$ is zero, i.e.,
$$
\sum_{k=0}^{K-1} \left\{    \pr(D=1\mid E=1, U=k) - \pr(D=1\mid E=0, U=k)  \right\} \pr(U=k) = 0.
$$ 
\end{assumption}

\begin{assumption}
%[Conditional independence null]
The average causal effect of $E$ on $D$ is zero in every stratum of $U$, i.e.,
the exposure $E$ is conditionally independent of the outcome $D$ given $U.$
\end{assumption}

Assumption 2 implies that $ \pr(D=1\mid E=1, U=k) = \pr(D=1\mid E=0, U=k) $ and thus entails Assumption 1.

\begin{assumption}
%[Sharp null]
The causal effect of $E$ on $D$ is zero for every individual in the population, i.e., $D_i(1)=D_i(0)$ for each individual $i$.
\end{assumption}

Under the ignorability assumption,
Assumption 3 implies that $\pr(D=1\mid E=1, U=k) =  \pr\{ D(1) = 1\mid E=1, U=k \} = \pr\{ D(1)=1\mid U=k \} = \pr\{  D(0)=1\mid U=k \} = \pr\{ D(0)\mid U=k, E=0 \} = \pr(D=1\mid E=0, U=k)$ and thus Assumption 2.

The previous literature on Cornfield conditions are derived under the conditional independence of the exposure $E$ and the outcome $D$ given the confounder $U$, i.e., $E\ind D\mid U$.
The ignorability assumption involving potential outcomes guarantees the causal interpretation of Assumptions 2 and 3.
%the average null and the conditional independence null.
Analogous assumptions can also be made using the causal diagram framework \citep{Pearl::1995}.
Without a formal causal framework, the Cornfield conditions under $E\ind D\mid U$ are the empirical conditions to explain away the crude exposure-outcome association by the association between $E$ and $U$ and that between $U$ and $D.$

\section{Cornfield Conditions for the Relative Risk with a Binary Confounder}

\citet{Cornfield::1959} derived their original conditions considering a binary confounder $U$.
Let $\RD_{ED} = \pr(D=1\mid E=1)-\pr(D=1\mid E=0)$ and $\RR_{ED} =\pr(D=1\mid E=1) / \pr(D=1\mid E=0)$ denote the risk difference and the relative risk of the exposure $E$ on the outcome $D$. The risk difference and relative risk for other variables can be defined analogously.
Without essential loss of generality, we assume $\RR_{ED}\geq 1$ and $\RR_{EU} \geq 1$. The conditions for preventive exposures are analogous. \citet{Cornfield::1959} showed that when
the confounder $U$ is binary and Assumption 2 holds, it must be true that
%the conditional independence null holds, the following conditions must hold:
\begin{eqnarray}
\RR_{EU} \geq \RR_{ED}, \label{eq::corn-EU}
\end{eqnarray}
which means that the relative risk of $E$ on $U$ is greater than or equal to the relative risk of $E$ on $D$; and also \citep[cf.][]{Schlesselman::1978}
\begin{eqnarray}
\RR_{UD} \geq \RR_{ED},\label{eq::corn-UD}
\end{eqnarray} 
i.e., the relative risk of $U$ on $D$ is greater than or equal to the relative risk of $E$ on $D.$ 
Thus, for a binary unmeasured confounder to explain away an observed relative risk, the relative risk between the exposure and the unmeasured confounder and between the unmeasured confounder and the outcome must both be greater than the observed exposure-outcome relative risk. These necessary conditions under Assumption 2 are often now referred to as the classical Cornfield conditions.

\section{Generalized Cornfield Conditions for the Relative Risk}

\citet{Cornfield::1959} obtained their results for a binary confounder $U$ under Assumption 2. We show in the Supplementary Material that under the weaker Assumption 1 condition (\ref{eq::corn-EU}) still holds, and condition (\ref{eq::corn-UD}) can be replaced by
\begin{eqnarray*}
\max(\RR_{UD|E=1},\RR_{UD|E=0})\geq \RR_{ED},
\end{eqnarray*}
where $\RR_{UD|E=1}$ and $\RR_{UD|E=0}$ are the relative risk of $U$ on $D$ with and without exposure, respectively.

\citet{Lee::2011} discussed the case when $U$ is categorical with levels $0,1,\ldots, K-1$.
Define $p_k = \pr(E=1\mid U=k)$ as the probability of the exposure within $U=k$, $q_k = p_k/(1-p_k)$ as the odds of the exposure within $U=k$, and $U_E = \max_k q_k /\min_k q_k$ as the ratio of the maximum and the minimum of these odds.
Define
$ r_k = \pr(D=1\mid E=0, U=k)$ as the probability of the outcome without exposure and with $U=k$, and $  U_D = \max_k r_k/\min_k r_k $ as the ratio of the maximum and the minimum of these probabilities.
Similarly, define $r_k^* =  \pr(D=1\mid E=1, U=k)$ as the probability of the outcome with exposure and $U=k$, and $  U_D^* = \max_k r_k^*/\min_k r_k^* $. 
When $U$ is binary, $U_E$ reduces to the odds ratio between $E$ and $U$, $U_D$ reduces to $\RR_{UD|E=0}$, and $U_D^*$ reduces to $\RR_{UD|E=1}.$

Under Assumption 2, $r_k = r_k^*$, and so $U_D=U_D^*$.
Consequently, \citet{Lee::2011} showed that the Cornfield condition 
\begin{eqnarray}
\min(U_E, U_D)  &\geq &\RR_{ED} \label{eq::lee-min}
\end{eqnarray}
holds for categorical $U$. He also derived a generalized Cornfield condition for the maximum of $U_E$ and $U_D$:
\begin{eqnarray}
\max(U_E, U_D) &\geq &\left\{ \RR_{ED} ^{1/2} + ( \RR_{ED} - 1 )^{1/2}  \right\}^2, \label{eq::lee-max}
\end{eqnarray}
which gives a higher threshold than the Cornfield condition (\ref{eq::lee-min}) and thus can be more informative.
For example, an observed crude $\RD_{ED} = 1.5$ requires that both $U_E$ and $U_D$ be greater than $1.5$; these are essentially the classical Cornfield conditions. However, \cite{Lee::2011}'s generalized Cornfield conditions also require
that the maximum of $U_E$ and $U_D$ be at least as large as $(1.5^{1/2}+0.5^{1/2})^2=3.73$. Similarly, an observed crude $\RD_{ED}=5$ would require that the maximum of $U_E$ and $U_D$ be at least as large as $(5^{1/2}+4^{1/2})^2 = 17.94$.

\citet{Lee::2011} obtained the above results (\ref{eq::lee-min}) and (\ref{eq::lee-max}) under Assumption 3, which can in fact be weakened to Assumption 2.
Furthermore in the Supplementary Material, we show that under Assumption 1, the following conditions must hold:
\begin{eqnarray*}
\min(U_E, U_D') \geq \RR_{ED},\quad 
\max(U_E, U_D') \geq \left\{   \RR_{ED} ^{1/2} + (\RR_{ED} - 1 )^{1/2} \right\}^2,
\end{eqnarray*}
where $U_D' = \max(U_D, U_D^*)$ replaces $U_D$ in conditions (\ref{eq::lee-min}) and (\ref{eq::lee-max}).

\section{Generalized Cornfield Conditions for the Risk Difference}

Because of \citet{Cornfield::1959}'s influential work, sensitivity analysis based on ratio measures has long been
predominant in epidemiology. 
However, sometimes the risk difference is of interest and sensitivity analysis can likewise be conducted using the risk difference.
\citet{Poole::2010} moreover showed, via a real example, that dismissal of the risk difference in sensitivity analysis will sometimes restrict evidence for important scientific findings, a point that we will discuss later.

For the risk difference, we define $\alpha_k = \pr(U=k\mid E=1) - \pr(U=k\mid E=0)$, and $A = \max_{k\geq 1} |\alpha_k|$.
The parameter $\alpha_k$ measures the difference in the probability that $U$ takes a particular value $k$ comparing exposed and unexposed, and $A$ is the maximum of these absolute differences.
Define $\beta_{1k} = \pr(D=1\mid E=1, U=k) - \pr(D=1\mid E=1, U=0), \beta_{0k} = \pr(D=1\mid E=0, U=k) - \pr(D=1\mid E=0, U=0)$, and $B = \max( \max_{k\geq 1}  | \beta_{1k} | ,  \max_{k\geq 1} |  \beta_{0k} | )$.
The parameters $\beta_{1k}$ and $\beta_{0k}$ measure the difference in the probability of the outcome comparing category of $U=k$ to $U=0$ in the exposed and unexposed groups, respectively; and $B$ is the maximum of these absolute differences.
When Assumption 2 holds, $\beta_{1k}  = \beta_{0k} = \beta_k = \pr(D=1\mid U=k)$ and $B=\max_{k\geq 1}  | \beta_k| .$

For a binary confounder $U$ with categories $0$ and $1$, $A = \RD_{EU} $ is the risk difference of $E$ on $U$, and
$B = \max(|\RD_{UD|E=1}|, |\RD_{UD|E=0}|)$ is the maximum of the absolute values of the risk differences of $U$ on $D$ with and without exposure.
Under Assumption 2, since $\RD_{UD|E=1} = \RD_{UD|E=0}$, $B = \RD_{UD} $ is the risk difference of $U$ on $D$. The generalized Cornfield conditions for the risk difference with a binary confounder are shown below.

\begin{theorem}
If the confounder $U$ is binary with $K=2$ and Assumption 1 holds, then
\begin{eqnarray}
\min\left\{    \RD_{EU} ,    \max(  \RD_{UD|E=1},  \RD_{UD|E=0}  ) \right\} &\geq& \RD_{ED},   \label{eq::corn-01-1-neyman} \\
\max\left\{   \RD_{EU} ,    \max(  \RD_{UD|E=1},  \RD_{UD|E=0}  ) \right\} &\geq&  \RD_{ED}^{1/2}. \label{eq::corn-01-2-neyman}
\end{eqnarray}
\end{theorem}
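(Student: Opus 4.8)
The plan is to reduce both displayed inequalities to a single exact identity that writes $\RD_{ED}$ as the product of $\RD_{EU}$ with a convex combination of the two conditional risk differences of $U$ on $D$, and then to read off the conclusions from elementary bounds; throughout I use the standing sign conventions $\RD_{ED}\geq 0$ and $\RD_{EU}\geq 0$ and the fact that any risk difference lies in $[-1,1]$. First I would fix notation for the binary confounder, writing $e_1=\pr(U=1\mid E=1)$, $e_0=\pr(U=1\mid E=0)$, $a_k=\pr(D=1\mid E=1,U=k)$, $b_k=\pr(D=1\mid E=0,U=k)$ and $\pi=\pr(E=1)$, so that $\RD_{EU}=e_1-e_0$, $\RD_{UD\mid E=1}=a_1-a_0$ and $\RD_{UD\mid E=0}=b_1-b_0$; the law of total probability then gives $\pr(D=1\mid E=1)=a_0+(a_1-a_0)e_1$ and the analogous expression for the unexposed group.

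The heart of the argument is to produce two parallel decompositions of $\RD_{ED}$, one organized around the exposed conditional distribution of $U$ and one around the unexposed distribution, namely $\RD_{ED}=S_f+\RD_{EU}\,\RD_{UD\mid E=0}$ and $\RD_{ED}=S_g+\RD_{EU}\,\RD_{UD\mid E=1}$, where $S_f$ and $S_g$ collect the stratum-specific exposure contrasts $a_k-b_k$ weighted by the exposed and the unexposed distributions of $U$, respectively. Taking the convex combination of these two identities with weights $\pi$ and $1-\pi$ turns the residual term into $\sum_k (a_k-b_k)\pr(U=k)$, which is exactly the average causal effect and therefore vanishes under Assumption 1. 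This yields the key identity $\RD_{ED}=\RD_{EU}\{\pi\,\RD_{UD\mid E=0}+(1-\pi)\,\RD_{UD\mid E=1}\}$. Writing $W$ for the bracketed weighted average and $M=\max(\RD_{UD\mid E=1},\RD_{UD\mid E=0})$, convexity gives $0\leq W\leq M$ and also $W\leq 1$.

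Finally I would extract the inequalities. From $\RD_{ED}=\RD_{EU}\,W$ with $W\leq 1$ I get $\RD_{EU}\geq \RD_{ED}$; from $\RD_{EU}\leq 1$ with $W\leq M$ I get $M\geq \RD_{ED}$; together these give (\ref{eq::corn-01-1-neyman}). For (\ref{eq::corn-01-2-neyman}), the same identity combined with $W\leq M$ and $\RD_{EU}\geq 0$ yields $\RD_{EU}\,M\geq \RD_{ED}$, whence $\max(\RD_{EU},M)^2\geq \RD_{EU}\,M\geq \RD_{ED}$ by the elementary bound $\max(x,y)^2\geq xy$ for $x,y\geq 0$, and the square-root bound follows; the degenerate case $\RD_{EU}=0$ forces $\RD_{ED}=0$, for which both claims are immediate. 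The main obstacle is the middle step: one has to arrange the two decompositions so that the $\pi$-weighted cross terms telescope precisely into the average causal effect, since this is what allows the weak Assumption 1 to do the work that earlier proofs extracted from the stronger Assumption 2. Once the identity is in hand, the remaining inequalities are routine applications of boundedness and convexity.
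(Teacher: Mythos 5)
Your proof is correct and takes essentially the same route as the paper's: your key identity $\RD_{ED}=\RD_{EU}\left\{ \pr(E=0)\,\RD_{UD|E=1}+\pr(E=1)\,\RD_{UD|E=0}\right\}$ is exactly the paper's Lemma 1 specialized to $K=2$ (the paper derives it in one pass via the law of total probability and $\pr(U=k\mid E=1)-\pr(U=k)=\alpha_k\pr(E=0)$, rather than via your two decompositions), and your extraction of the min and square-root bounds from $\RD_{ED}\leq \RD_{EU}\max(\RD_{UD|E=1},\RD_{UD|E=0})$ matches the paper, with your convexity bound $W\leq M$ neatly replacing the paper's three-way sign case analysis. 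One small inaccuracy, shared in spirit with the paper's own proof: $0\leq W$ follows not from convexity but from the identity together with $\RD_{ED}\geq 0$ and $\RD_{EU}>0$, so in the degenerate case $\RD_{EU}=0$ (where $M$ could be negative) inequality (\ref{eq::corn-01-1-neyman}) is not literally ``immediate'' but is covered by the standing convention of a strictly positive observed association.
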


Under Assumption 2, conditions (\ref{eq::corn-01-1-neyman}) and (\ref{eq::corn-01-2-neyman}) can be further simplified to
\begin{eqnarray}
\min(   \RD_{EU}  ,  \RD_{UD}   ) &\geq& \RD_{ED}, \label{eq::corn-01-1}\\
\max(   \RD_{EU} ,  \RD_{UD}  )  &\geq& \RD_{ED}^{1/2} . \label{eq::corn-01-2}
\end{eqnarray}

The Cornfield conditions for the risk difference for a binary confounder thus show that for an unmeasured confounder to explain away an observed risk difference for $E$ on $D$, the risk difference for $E$ on $U$ and also that for $U$ on $D$ must both be larger than the observed risk difference for $E$ on $D$. Moreover, the maximum of these two risk differences for the unmeasured confounder must be greater than the square root of the observed risk difference for $E$ on $D$.

\citet{Cornfield::1959} obtained, but did not appreciate the significance of equation $\RD_{ED} =\RD_{EU}  \RD_{UD}$, which leads to conditions (\ref{eq::corn-01-1}) and (\ref{eq::corn-01-2}).
\citet{Gastwirth::1998} and \citet{Poole::2010} discussed the first Cornfield condition (\ref{eq::corn-01-1}) for the risk difference in the presence of a binary confounder, and the second one (\ref{eq::corn-01-2}) is new to the best of our knowledge. 
Although quite simple,
the new square root bounds (\ref{eq::corn-01-2-neyman}) and (\ref{eq::corn-01-2}) can be substantial improvements over (\ref{eq::corn-01-1-neyman}) and (\ref{eq::corn-01-1}),
since $\RD_{ED}$ is very small in many applications.

We can further extend these Cornfield conditions for the risk difference to allow for a categorical, rather than binary, confounder $U$ with an arbitrary number of categories.
However, the generalized Cornfield conditions for the risk difference then depend on the number of categories of the confounder $U$. With more than two categories, we have the following conditions.

\begin{theorem}\label{thm::rd-general}
With a categorical confounder $U (K\geq 3)$, if Assumption 1 holds,
\begin{eqnarray}
A &\geq& \RD_{ED}/(K-1),\label{eq::bound-A}\\
B&\geq& \RD_{ED}/2 , \label{eq::bound-B}\\
\max(A, B) &\geq& \max\left[   \{ \RD_{ED}/(K-1) \}^{1/2},  \RD_{ED}/2 \right]  .\label{eq::bound-AB}
\end{eqnarray}
\end{theorem}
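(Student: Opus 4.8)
The plan is to reduce all three inequalities to a single exact identity that writes $\RD_{ED}$ as a weighted sum of the $\alpha_k$, and then to apply only elementary estimates; without loss of generality take $\RD_{ED}\ge 0$. Write $a_k=\pr(U=k\mid E=1)$, $b_k=\pr(U=k\mid E=0)$, $r_k^*=\pr(D=1\mid E=1,U=k)$, $r_k=\pr(D=1\mid E=0,U=k)$ and $e=\pr(E=1)$, so that $\alpha_k=a_k-b_k$, $\beta_{1k}=r_k^*-r_0^*$, $\beta_{0k}=r_k-r_0$, and $\pr(U=k)=ea_k+(1-e)b_k$. First I would expand $\RD_{ED}=\sum_k r_k^* a_k-\sum_k r_k b_k$ in two ways, inserting $\pm\sum_k r_k^* b_k$ and $\pm\sum_k r_k a_k$ respectively, and use $\sum_k\alpha_k=0$ to replace $r_k^*$ by $r_k^*-r_0^*=\beta_{1k}$ in the first expansion and $r_k$ by $r_k-r_0=\beta_{0k}$ in the second. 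This yields
\begin{align*}
\RD_{ED}&=\sum_{k\ge 1}\beta_{1k}\alpha_k+\sum_k (r_k^*-r_k)b_k,\\
\RD_{ED}&=\sum_{k\ge 1}\beta_{0k}\alpha_k+\sum_k (r_k^*-r_k)a_k.
\end{align*}

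The crucial step, and the only place Assumption 1 is needed, is to remove the two residual terms. Assumption 1 states exactly that $\sum_k (r_k^*-r_k)\pr(U=k)=e\sum_k(r_k^*-r_k)a_k+(1-e)\sum_k(r_k^*-r_k)b_k=0$, so taking $(1-e)$ times the first display plus $e$ times the second cancels both residuals and leaves
$$
\RD_{ED}=\sum_{k\ge 1}\bar\beta_k\,\alpha_k,\qquad \bar\beta_k=(1-e)\beta_{1k}+e\beta_{0k}.
$$
Because $\bar\beta_k$ is a convex combination of $\beta_{1k}$ and $\beta_{0k}$, and each of these lies in $[-1,1]$, it obeys $|\bar\beta_k|\le B$ and $|\bar\beta_k|\le 1$. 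I expect this identity to be the main obstacle: under Assumption 2 one has $r_k^*=r_k$ and $\beta_{1k}=\beta_{0k}$, so it is immediate, whereas under the weaker Assumption 1 the two-way expansion followed by the convex-combination cancellation is exactly what rescues it.

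Given the identity, the three bounds follow from the facts that the sum has $K-1$ terms, that $|\alpha_k|\le A$, and the total-variation inequality $\sum_{k\ge 1}|\alpha_k|\le\sum_k|a_k-b_k|\le 2$. For \eqref{eq::bound-A} I would use $|\bar\beta_k|\le 1$ to get $\RD_{ED}\le\sum_{k\ge 1}|\alpha_k|\le (K-1)A$. For \eqref{eq::bound-B} I would use $|\bar\beta_k|\le B$ to get $\RD_{ED}\le B\sum_{k\ge 1}|\alpha_k|\le 2B$. For \eqref{eq::bound-AB}, bounding both $|\bar\beta_k|$ and $|\alpha_k|$ by $\max(A,B)$ gives $\RD_{ED}\le(K-1)\max(A,B)^2$, hence $\max(A,B)\ge\{\RD_{ED}/(K-1)\}^{1/2}$, while $\max(A,B)\ge B\ge\RD_{ED}/2$ gives the other half of the maximum.

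Finally I would note that the factor $2$ in \eqref{eq::bound-B}, as opposed to the factor $1$ hidden in the binary Theorem~1, is precisely the gap between $\sum_{k\ge 1}|\alpha_k|\le 2$ when $K\ge 3$ and $|\alpha_1|\le 1$ when $K=2$; this is the source of the dependence of the categorical bounds on $K$.
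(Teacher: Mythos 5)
Your proposal is correct and takes essentially the same approach as the paper: your identity $\RD_{ED}=\sum_{k\ge 1}\left\{(1-e)\beta_{1k}+e\beta_{0k}\right\}\alpha_k$ is exactly the paper's Lemma 1 (which the paper derives by the equivalent route of writing $\pr(U=k\mid E=1)-\pr(U=k)=\alpha_k\pr(E=0)$ and $\pr(U=k\mid E=0)-\pr(U=k)=-\alpha_k\pr(E=1)$ rather than your two-way expansion with residual cancellation), and your three estimates via $|\alpha_k|\le A$, $|(1-e)\beta_{1k}+e\beta_{0k}|\le\min(B,1)$, and $\sum_{k\ge 1}|\alpha_k|\le 2$ mirror the paper's proof of Theorem 2 step for step. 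The only content in the paper's proof beyond yours is the analysis of when each bound is attained, which establishes sharpness but is not needed for the inequalities themselves.
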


Theorem \ref{thm::rd-general} does not rely on the choice of the reference level of $U$, so continues to hold if we arbitrarily relabel some other level
to be $U=0.$
Inequalities (\ref{eq::bound-A}) to (\ref{eq::bound-AB}) show that the conditions for $A$ and $B$ become weaker with a larger value of $K.$
For example, if $U$ has three categories with $K=3$, as would often be the case with a genetic confounder, the generalized Cornfield conditions above can be simplified as
\begin{eqnarray}\label{eq::k=3}
\min(A, B)  \geq  \RD_{ED}/2,
\quad 
\max(A,B) \geq  ( \RD_{ED}/2 )^{1/2}.
\end{eqnarray}
Although the conditions above are weaker when $U$ has three categories, the lower bound of the maximum of $A$ and $B$, $(\RD_{ED}/2 )^{1/2}$, can still be very informative even if $\RD_{ED}$ is small.

In many practical problems, the following monotonicity assumption is plausible.

\begin{assumption}
\label{assume::mono}
For $k=1,\ldots,(K-1)$, $\alpha_k\geq 0$.
\end{assumption}

Assumption 4 requires that each non-zero category of $U$ is more prevalent under exposure than without the exposure.
If only one category of $U$ is less prevalent under exposure, Assumption \ref{assume::mono} holds if we choose this category to be the reference level $U=0.$
For example, Assumption \ref{assume::mono} holds for a binary confounder $U$ without imposing any restrictions.

\begin{theorem}\label{thm::rd::general-mono}
With a categorical confounder $U (K\geq 2)$, under Assumptions 1 and 4,
\begin{eqnarray}
A &\geq& \RD_{ED}/(K-1),\label{eq::bound-A-mono}\\
B&\geq& \RD_{ED}, \label{eq::bound-B-mono}\\
\max(A, B) &\geq& \max\left[  \{ \RD_{ED}/(K-1) \}^{1/2},  \RD_{ED}\right]  .\label{eq::bound-AB-mono}
\end{eqnarray}
\end{theorem}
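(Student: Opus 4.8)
The plan is to reduce everything to a single algebraic identity that expresses $\RD_{ED}$ as a bilinear form in the $\alpha_k$ and the $\beta$'s, and then to read off the three inequalities by elementary bounds. Write $\pi=\pr(E=1)$, $s_k=\pr(U=k\mid E=1)$, $t_k=\pr(U=k\mid E=0)$, $r_k^*=\pr(D=1\mid E=1,U=k)$ and $r_k=\pr(D=1\mid E=0,U=k)$, so that $\alpha_k=s_k-t_k$, $\beta_{1k}=r_k^*-r_0^*$ and $\beta_{0k}=r_k-r_0$. First I would condition on $U$ to get $\RD_{ED}=(r_0^*-r_0)+\sum_{k\ge1}(\beta_{1k}s_k-\beta_{0k}t_k)$, using $\sum_k s_k=\sum_k t_k=1$ together with $\beta_{10}=\beta_{00}=0$.

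Next I would use Assumption 1, which after the same reference-level expansion reads $(r_0^*-r_0)+\sum_{k\ge1}(\beta_{1k}-\beta_{0k})\pr(U=k)=0$, to eliminate the nuisance term $r_0^*-r_0$. Substituting $\pr(U=k)=\pi s_k+(1-\pi)t_k$ and simplifying via $s_k-\pr(U=k)=(1-\pi)\alpha_k$ and $\pr(U=k)-t_k=\pi\alpha_k$ collapses everything to the key identity
\begin{equation*}
\RD_{ED}=\sum_{k\ge1}\alpha_k\bar{\beta}_k,\qquad \bar{\beta}_k=(1-\pi)\beta_{1k}+\pi\beta_{0k}.
\end{equation*}
Because $\bar{\beta}_k$ is a convex combination of $\beta_{1k}$ and $\beta_{0k}$, it satisfies $|\bar{\beta}_k|\le\max(|\beta_{1k}|,|\beta_{0k}|)\le B$ and also $|\bar{\beta}_k|\le1$; and of course $|\alpha_k|\le A$.

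With this identity in hand the three bounds are immediate. For (\ref{eq::bound-A-mono}) I would bound $\RD_{ED}\le\sum_{k\ge1}|\alpha_k|\,|\bar{\beta}_k|\le\sum_{k\ge1}|\alpha_k|\le(K-1)A$, using only $|\bar{\beta}_k|\le1$. For (\ref{eq::bound-B-mono}) I would invoke monotonicity: when $\alpha_k\ge0$ for all $k\ge1$ one has $\sum_{k\ge1}\alpha_k=t_0-s_0\le1$, so $\RD_{ED}\le\sum_{k\ge1}\alpha_k|\bar{\beta}_k|\le B\sum_{k\ge1}\alpha_k\le B$. For (\ref{eq::bound-AB-mono}), the part $\max(A,B)\ge\RD_{ED}$ is inherited from (\ref{eq::bound-B-mono}), while the part $\max(A,B)\ge\{\RD_{ED}/(K-1)\}^{1/2}$ follows from $\RD_{ED}\le\sum_{k\ge1}|\alpha_k|\,|\bar{\beta}_k|\le(K-1)AB\le(K-1)\max(A,B)^2$.

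The substantive content, and the only real obstacle, is the identity $\RD_{ED}=\sum_{k\ge1}\alpha_k\bar{\beta}_k$: once it is available the inequalities are one-line convexity and triangle-inequality arguments. The single place where Assumption \ref{assume::mono} enters is in the step $\sum_{k\ge1}\alpha_k\le1$, which upgrades the crude $\sum_{k\ge1}|\alpha_k|\le2$ available under Assumption 1 alone; this is exactly what sharpens the factor of $2$ in the non-monotone bound $B\ge\RD_{ED}/2$ of Theorem~\ref{thm::rd-general} to $B\ge\RD_{ED}$ here. I would double-check the boundary bookkeeping ($\beta_{10}=\beta_{00}=0$, $\sum_k\alpha_k=0$, and $0\le s_0,t_0\le1$), since the improved constant hinges entirely on it.
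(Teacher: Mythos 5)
Your proposal is correct and follows essentially the same route as the paper: your identity $\RD_{ED}=\sum_{k\geq1}\alpha_k\{(1-\pi)\beta_{1k}+\pi\beta_{0k}\}$ is exactly the paper's Lemma 1 (reached by equivalent algebra, eliminating $r_0^*-r_0$ via Assumption 1 rather than decomposing $\pr(U=k\mid E=1)-\pr(U=k)$), and the three bounds then proceed by the same triangle-inequality and convexity steps, with Assumption 4 entering just as in the paper through $\sum_{k\geq1}\alpha_k=-\alpha_0\leq1$. The only cosmetic difference is that you obtain $\max(A,B)\geq\RD_{ED}$ directly from $B\geq\RD_{ED}$, whereas the paper's case analysis according to whether $(K-1)\RD_{ED}\leq1$ serves its additional sharpness (attainability) discussion, which the theorem statement itself does not demand.
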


For example, the conditions in (\ref{eq::k=3}) with a three category confounder can be improved to
\begin{eqnarray}
\label{eq::k=3m}
A\geq \RD_{ED}/2, B\geq \RD_{ED},  \quad \max(A,B)\geq \max\left\{  (\RD_{ED}/2)^{1/2}, \RD_{ED} \right\} .
\end{eqnarray}

We call (\ref{eq::corn-01-1-neyman}) to (\ref{eq::k=3m}) the generalized Cornfield conditions for the risk difference.
The bounds from (\ref{eq::corn-01-1-neyman}) to (\ref{eq::k=3m}) are sharp, in the sense that they cannot be improved without additional assumptions. 
The bounds given above for Assumption 2 are all also sharp under Assumption 3.
The proofs for attaining the bounds are all given in the Supplementary Material.

\section{Illustrations}

\begin{example}
\citet{Poole::2010} discussed an example concerning insights from the Confield conditions for the risk difference, which were overlooked by \citet{Cornfield::1959}.
\citet{Hammond::1958}'s study of smoking and death from coronary artery disease found that $\RR_{ED}=1.7$ and $\RD_{ED}= 0.013\%$. 
Based on the relative risks, there is limited evidence for a true causal of smoking on death, since the crude relative risk of smoking on death $\RR_{ED}=1.7$
is smaller than the relative risk of bad temper in smokers and nonsmokers $\RR_{EU}=2.6$ found in \citet{Lilienfeld::1959}'s study.
However, the risk difference of the exposure on bad temper is smaller than the observed risk difference of the exposure on the outcome, since $\RD_{EU}= 0.012\%<\RD_{ED}$.
\citet{Poole::2010} noted that
if we assume $U=$ bad temper, then the common cause $U$ could not explain away the risk difference of smoking on death from coronary artery diseases.

Poole's observation was very insightful. Unfortunately, however, the difference between the exposure-confounder association $\RD_{EU} =0.012\%$ and the actual observed exposure-outcome association $\RD_{ED} =0.013\%$ is very small and likely within sampling variabilities of these studies. In fact, \citet{Poole::2010} noted that $\RD_{ED}=0.012\%$ in \citet{Doll::1964}'s study, and then the basic Cornfield condition $\RD_{EU}\geq \RD_{ED}$ would not be sufficient
to reject the common cause theory. However, if we were willing to assume that 
the risk difference between smoking and bad temper is stronger than the risk difference between
bad temper and death, i.e., $\RD_{EU} > \RD_{UD|E=1}$ and $\RD_{EU} > \RD_{UD|E=0} $, then by the generalized Cornfield conditions for the measures of $\RD_{EU}$ and $\max(\RD_{UD|E=1}, \RD_{UD|E=0})$ given in (\ref{eq::corn-01-2-neyman}), for the unmeasured common cause bad temper to explain away the effect, 
 we would require that $\RD_{EU} = \max( \RD_{EU}, \RD_{UD|E=1}, \RD_{UD|E=0}  ) \geq \RD_{ED}^{1/2} =  (0.012\%)^{1/2} = 1.095\%$, which is more than $84$ times larger than the observed $\RD_{EU}$.
The confounder bad temper could then not explain away the association, and the evidence for causation would be very strong.
\end{example}

\begin{example}
R. A. Fisher conjectured that ``cigarette-smoking and lung cancer, though not mutually causative, are both influenced by a common cause, in this case the individual genotype'' \citep{Fisher::1957}. 
Consider the case that the genotype has three categories (AA, Aa, aa), where the Cornfield condition with a binary confounder does not apply.
From \citet{Hammond::1958}'s study, the relative risk and the risk difference of smoking on lung cancer are $\RR_{ED} = 10.7$ and $\RD_{ED} = 0.094\%.$
If we want to assume no average causal effect of smoking on lung cancer,
the generalized Cornfield conditions for the risk ratio require that $\min(U_E, U_D') \geq \RR_{ED} = 10.7,$ and $ \max(U_E, U_D') \geq \{ \RR_{ED}^{1/2} + (\RR_{ED}-1)^{1/2}\}^2 =  ( 10.7^{1/2} + 9.7^{1/2})^2 =  40.77$. Without Assumption 4, our conditions for the risk difference require $ A\geq \RD_{ED}/2 =  0.047\%, B\geq \RD_{ED}/2 =  0.047\%,$ and $ \max(A, B)\geq (\RD_{ED}/2)^{1/2} =  2.168\%$, and with Assumption 4, the condition for $B$ can be further improved to $B\geq \RD_{ED} =0.094\%.$
These bounds are all useful for determining whether a certain genotype can explain away the association between smoking and lung cancer.
\end{example}

\section{Discussion}

The risk difference scale can sometimes be used in sensitivity analysis.
More specifically, as
pointed out by \citet{Poole::2010}, the Cornfield conditions for the risk difference can be useful for discovering causal effects in observational studies. Our new lower bound of the maximum of $\RD_{EU}$ and $\max(\RD_{UD|E=1}, \RD_{UD|E=0})$ is a significant improvement of the basic Cornfield condition for the risk difference with a binary confounder, when the risk difference of the exposure on the outcome is small.
We also illustrate its usefulness in Example 1, where the lower bound may provide a sharper conclusion.

The results in our paper are also interesting from a theoretical perspective in two further respects. First, it has been argued 
that causal conclusions are most sensitive to an unobserved confounder that is binary rather than categorical or continuous, at least when using matched pairs analysis. It has been suggested that
it thus suffices to perform sensitivity analysis with a binary confounder \citep{Wang::2006}.
However, these results on the conservative nature of a binary confounder were derived with sensitivity analysis parameters expressed on a ratio scale. In this paper, we have likewise seen that using a ratio scale the Cornfield conditions for a categorical unmeasured confounder in (3) are essentially identical to those with a binary unmeasured confounder in (1) and (2). For relative risks, a binary unmeasured confounder seems to once again suffice. However, our 
results here for the risk difference demonstrate that the generalized Cornfield conditions for the risk difference do depend on the number of categories of the unmeasured confounder $U$. 
The requirements on the unmeasured confounder weaken as the number of categories of $U$ increases. Thus the sensitivity of the causal conclusions when the unmeasured confounder is binary is not the most conservative case if the sensitivity analysis parameters are expressed on the risk difference scale.

Second, since \citet{Cornfield::1959}'s seminal work, the relative risk measure
has often been claimed to be better suited for assessing causality. \citet{Poole::2010} recently summarized the historical reasons for this and also criticized this notion. 
Our results demonstrate that the generalized Cornfield conditions for the risk difference do depend on the number of categories of $U$, while those for the relative risk do not.
The Cornfield conditions for the risk difference become less informative as the number of categories increases.
The Cornfield conditions for the relative risk do not suffer from this problem.
Therefore, using the relative risk for assessing causality may in fact have some basis, because the generalized Cornfield conditions for the relative risk do not depend on the number of categories of $U$.

%
%
%\section*{Acknowledgement}
%The author thanks the editor, associate editor and a referee for helpful comments. This research
%was funded by National Institutes of Health, U.S.A.
%
%%
%\section*{Supplementary material}
%\label{SM}
%Supplementary material available at \Bka~online includes the proofs
%of the conclusions in Section 4 and the theorems in Section 5.
%

%\bibliographystyle{biometrika}
%\bibliography{paper-ref}

\newpage

\begin{center}
\bfseries \Large Supplementary Materials
\end{center}

\section*{Appendix A}

This Appendix gives a proof of the Cornfield conditions for the relative risk with a binary confounder under Assumption 1.
\begin{proof}
Define $f = \pr(U=1), p_e = \pr(E=1) , f_1 = \pr(U=1\mid E=1),$ and $f_0  = \pr(U=1\mid E=0)$.
We have $f = p_e  f_1 + (1 -  p_e ) f_0 $, and we assume $\RR_{EU} = f_1/f_0 \geq 1$.
Recall the definitions of $r_k^* = \pr(D=1\mid E=1, U=k)$ and $r_k = \pr(D=1\mid E=0, U=k)$ in the main text.
For simplicity in the proof, we use $\RR_1  = \RR_{UD|E=1} = r_1^* / r_0^* $ and $\RR_0 = \RR_{UD|E=0} = r_1 / r_0$ as the relative risks of $U$ on $D$ given $E=1$ and $E=0$, respectively.

Assumption 1 
$$
1 =
\frac{    \pr(D=1\mid E=1, U=1) \pr(U=1) + \pr(D=1\mid E=1, U=0) \pr(U=0)    }{     \pr(D=1\mid E=0, U=1) \pr(U=1) + \pr(D=1\mid E=0, U=0) \pr(U=0)   } 
$$
is equivalent to
\begin{eqnarray}\label{eq::neyman-null}
1 = 
\frac{   r_1^*  f + r_0^* (1-f)  }{  r_1  f + r_0  (1-f)    }
=  \frac{r_0^*}{r_0}  \times   \frac{ \RR_1 f + (1-f)  }{ \RR_0 f + (1-f) }.
\end{eqnarray}
Therefore, the observed relative risk of $E$ on $D$ 
$$
\RR_{ED} =
\frac{    \pr(D=1\mid E=1, U=1) \pr(U=1\mid E=1) + \pr(D=1\mid E=1, U=0) \pr(U=0\mid E=1)    }{     \pr(D=1\mid E=0, U=1) 
\pr(U=1\mid E=0) + \pr(D=1\mid E=0, U=0) \pr(U=0\mid E=0)   } 
$$
can be expressed as
\begin{eqnarray*}
\RR_{ED}= \frac{  r_1^* f_1 + r_0^* (1-f_1)  }{   r_1 f_0 + r_0 (1-f_0)  } 
= \frac{r_0^*}{r_0}  \times \frac{   \RR_1 f_1 + (1-f_1)  }{ \RR_0 f_0 + (1-f_0) } 
=  \frac{ \RR_0 f + (1-f) } { \RR_1 f + (1-f)  }   \times \frac{   \RR_1 f_1 + (1-f_1)  }{ \RR_0 f_0 + (1-f_0) } .
\end{eqnarray*}
The last equation above is obtained by replacing $ r_0^* / r_0 $ by $\{ \RR_0 f + (1-f) \} /  \{ \RR_1 f + (1-f)  \}  $ due to (\ref{eq::neyman-null}).
The above equation can be further simplified as 
$$
\RR_{ED} = G \times H,$$ 
where
\begin{eqnarray*}
G =   \frac{ \RR_0 f + (1-f) } { \RR_1 f + (1-f)  } 
\quad \text{and} \quad 
H = \frac{   \RR_1 f_1 + (1-f_1)  }{ \RR_0 f_0 + (1-f_0) } .
\end{eqnarray*}

We first treat $(\RR_1, \RR_0 , f_1, f_0)$ as fixed, and thus $G$ is a function of $p_e$ with partial derivative
\begin{eqnarray*}
\frac{\partial G }{ \partial  p_e}  
 =  \frac{\partial }{\partial f}  \left\{  \frac{ \RR_0 f + (1-f) } { \RR_1 f + (1-f)  }  \right\} 
\times   \frac{\partial f}{\partial p_e}  
%&=& \frac{    (\RR_0 - 1) \{ \RR_1 f + (1-f) \}  - (\RR_1 - 1) \{  \RR_0 f + (1-f)  \}     }{  \{\RR_1 f + (1-f)\}^2   } \times (f_1-f_0)\\
=  \frac{  (\RR_0 - \RR_1)( f_1 - f_0)   }{    \{ \RR_1 f + (1-f)  \}^2    } .
\end{eqnarray*}
Therefore, $G$ is increasing in $p_e\in [0,1]$ if $\RR_0 > \RR_1$, and non-increasing in $p_e\in [0,1]$ if $\RR_0 \leq \RR_1$. Our proof below is divided into two cases accordingly.

If $\RR_0 > \RR_1$, $G$ has its maximum at $p_e = 1$ or $f=f_1.$ Therefore,
\begin{eqnarray}
\label{eq::rr0-larger}
\RR_{ED} \leq  \frac{ \RR_0 f_1 + (1-f_1) } { \RR_1 f_1 + (1-f_1)  } \times \frac{   \RR_1 f_1 + (1-f_1)  }{ \RR_0 f_0 + (1-f_0) }
= \frac{  (\RR_0 - 1) f_1 +1    }{ (\RR_0  - 1) f_0  + 1  } .
\end{eqnarray}
With $f_1 \geq  f_0$, we must have $\RR_0 \geq 1$, since $\RR_0 < 1$ would contradict the assumption $\RR_{ED} \geq 1$. 
Then (\ref{eq::rr0-larger}) attains its maximum at $f_1=1$ and $f_0=0$, implying that $\RR_{ED} \leq \RR_0 = \max(\RR_1, \RR_0)$.
We further obtain from (\ref{eq::rr0-larger}) that
\begin{eqnarray}
\label{eq::rr0-larger-2}
\RR_{ED} \leq  \frac{f_1}{f_0} \times \frac{1 + (1-f_1)/\RR_0 }{1 + (1-f_0)/\RR_0}  \leq \frac{f_1}{f_0}  = \RR_{EU},
\end{eqnarray}
where the second inequality in (\ref{eq::rr0-larger-2}) holds since $f_1 \geq  f_0$.

If $\RR_0 \leq \RR_1$, $G$ has its maximum at $p_e = 0$ or $f = f_0.$ Therefore, 
\begin{eqnarray}
\label{eq::rr0-smaller}
\RR_{ED} \leq \frac{ \RR_0 f_0 + (1-f_0) } { \RR_1 f_0 + (1-f_0)  } \times \frac{   \RR_1 f_1 + (1-f_1)  }{ \RR_0 f_0 + (1-f_0) }
= \frac{  (\RR_1 - 1) f_1 +1    }{ (\RR_1  - 1) f_0  + 1  } .
\end{eqnarray}
By similar argument, we must have $\RR_1 \geq 1$, and the right-hand side of (\ref{eq::rr0-smaller}) attains its maximum at $f_1=1$ and $f_0=0$. 
Therefore $\RR_{ED} \leq \RR_1 = \max(\RR_1, \RR_0)$. The same argument as above shows that $\RR_{ED} \leq \RR_{EU}$.

In summary, we have shown that $\max(\RR_1, \RR_0)  \geq  \RR_{ED} $ and $\RR_{EU} \geq \RR_{ED}$ in all cases.
\end{proof} 

\section*{Appendix B}

This Appendix gives a proof of \cite{Lee::2011}'s conditions for relative risk under Assumption 1.
Our proof here is based on \cite{Lee::2011}'s notation and conclusions.
Define 
$$
\SRR^{E+} = \frac{\sum_{k=0}^{K-1}  \pr(U=k)\pr(E=1\mid U=k) \pr(D=1\mid E=1, U=k)  }{ \sum_{k=0}^{K-1}  \pr(U=k)\pr(E=1\mid U=k) \pr(D=1\mid E=0, U=k)  }
$$
as the standardized relative risk with the exposed group taken as the standard population,
$$
\SRR^{E-} = \frac{\sum_{k=0}^{K-1}  \pr(U=k)\pr(E=0 \mid U=k) \pr(D=1\mid E=1, U=k)  }{ \sum_{k=0}^{K-1}  \pr(U=k)\pr(E=0\mid U=k) \pr(D=1\mid E=0, U=k)  }
$$
as the standardized relative risk with the unexposed group taken as the standard population, and
$$
\SRR^T =  \frac{\sum_{k=0}^{K-1}  \pr(U=k)  \pr(D=1\mid E=1, U=k)  }{ \sum_{k=0}^{K-1}  \pr(U=k)  \pr(D=1\mid E=0, U=k)  }
$$
as the standardized relative risk with the total group as the standard population.
And the confounding relative risks are defined as $\CRR^{E+} = \RR_{ED}/\SRR^{E+},\CRR^{E-} = \RR_{ED}/\SRR^{E-}$, and $\CRR^T = \RR_{ED}/\SRR^T.$
Lee (2011) showed that 
$$
\frac{1}{\CRR^T} = \frac{w}{ \CRR^{E+}  } + \frac{1-w}{ \CRR^{E-} }, 
$$
where $w$ is a positive number between $0$ and $1$.
The following conclusions in \cite{Lee::2011} are useful for our proof:
\begin{eqnarray*}
\CRR^{E+} \leq  \left\{   \frac{  (U_E U_D)^{1/2} + 1}{ U_E^{1/2}  + U_D^{1/2} } \right\}^2,
\quad 
\CRR^{E-} \leq   \left\{  \frac{  (U_E U_D^*)^{1/2} + 1}{ U_E^{1/2}  + U_D^{*1/2} } \right\}^2 ,
\end{eqnarray*}

\begin{proof}
It can be directly verified that 
$$
 \frac{  (U_E U_D)^{1/2} + 1}{ U_E^{1/2}  + U_D^{1/2} }$$
is increasing in both $U_D$ and $U_E$. For example, we have 
\begin{eqnarray*}
\frac{\partial }{\partial U_D} \left\{    \frac{  (U_E U_D)^{1/2} + 1}{ U_E^{1/2}  + U_D^{1/2} }  \right\}  
&=& 
\frac{    U_E^{1/2}  U_D^{-1/2}  (U_E^{1/2}  + U_D^{1/2})   -
\{  (U_E U_D)^{1/2} + 1 \} U_D^{-1/2}  }
{  2  ( U_E^{1/2}  + U_D^{1/2})^2   }\\
&=&
\frac{   U_E - 1   }{  2U_D^{1/2}   (U_E^{1/2}  + U_D^{1/2})^2     } \geq 0.
\end{eqnarray*}

By definition of $U_D'$ and 
according to \cite{Lee::2011}, we have
$$
\CRR^{E+}  \leq  \left\{  \frac{  (U_E U_D')^{1/2} + 1}{ U_E^{1/2}  + U_D^{'1/2} } \right\}^2
\text{ and }
\CRR^{E-} \leq  \left\{  \frac{  (U_E U_D')^{1/2} + 1}{ U_E^{1/2}  + U_D^{'1/2} } \right\}^2,
$$
which lead to
$$
\frac{1}{\CRR^T}  = \frac{\SRR^T}{\RR_{ED}} = \frac{w}{\CRR^{E+}} + \frac{1-w}{\CRR^{E-}} \geq   
\left\{  \frac{  (U_E U_D')^{1/2} + 1}{ U_E^{1/2}  + U_D^{'1/2} } \right\}^{-2}.
$$
When Assumption 1 holds with $\SRR^T = 1$, we have
$$
 \left\{  \frac{  (U_E U_D')^{1/2} + 1}{ U_E^{1/2}  + U_D^{'1/2} } \right\}^2 \geq   \RR_{ED}. 
$$
Letting $U_D'\rightarrow +\infty$ on the left-hand side of the last equation, we have
$U_E \geq \RR_{ED}$. By symmetry, we have $U_D' \geq \RR_{ED}$. And therefore, 
$
\min(U_E, U_D') \geq \RR_{ED}.
$
By monotonicity, we have
$$
 \left\{ \frac{  \max( U_E, U_D' )  + 1}{ 2 \max^{1/2}( U_E, U_D' )  } \right\}^2
  \geq    \left\{  \frac{  (U_E U_D')^{1/2} + 1}{ U_E^{1/2}  + U_D^{'1/2} } \right\}^2
   \geq   \RR_{ED},
$$
which implies that
$
\max( U_E, U_D' )  \geq \left\{  \RR_{ED}^{1/2}  + (\RR_{ED} - 1)^{1/2}  \right\}^2. 
$
Therefore, \cite{Lee::2011}'s conditions hold for $U_E$ and $U_D'.$
\end{proof}

\section*{Appendix C}

This Appendix gives proofs of the generalized Cornfield conditions for the risk difference under Assumption 1.
In order to prove Theorems 1 to 3, we need the following lemma.

\begin{lemma}
Under Assumption 1, the risk difference of $E$ on $D$ can be expressed as
$$
\RD_{ED} =  \sum_{k=1}^{K-1} \alpha_k  \{  \beta_{1k} \pr(E=0) + \beta_{0k} \pr(E=1) \}.
$$
\end{lemma}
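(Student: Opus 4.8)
The plan is to start from the definition $\RD_{ED} = \pr(D=1\mid E=1) - \pr(D=1\mid E=0)$ and expand each marginal over the strata of $U$. Writing $r_k^* = \pr(D=1\mid E=1, U=k)$ and $r_k = \pr(D=1\mid E=0, U=k)$ as in the main text, this gives
$$
\RD_{ED} = \sum_{k=0}^{K-1} r_k^* \pr(U=k\mid E=1) - \sum_{k=0}^{K-1} r_k \pr(U=k\mid E=0).
$$
The objective is to collapse this into a single sum weighted by $\alpha_k$, and then to re-express $r_k^*$ and $r_k$ through the centered differences $\beta_{1k}$ and $\beta_{0k}$.

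First I would produce two equivalent expressions for $\RD_{ED}$ by eliminating one of the two conditional distributions of $U$ in favor of the other, using $\pr(U=k\mid E=1) = \pr(U=k\mid E=0) + \alpha_k$ and, symmetrically, $\pr(U=k\mid E=0) = \pr(U=k\mid E=1) - \alpha_k$. These substitutions yield
$$
\RD_{ED} = \sum_k (r_k^* - r_k)\pr(U=k\mid E=0) + \sum_k r_k^* \alpha_k
$$
and
$$
\RD_{ED} = \sum_k (r_k^* - r_k)\pr(U=k\mid E=1) + \sum_k r_k \alpha_k.
$$

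The crucial step, and the only part of the argument that is not bookkeeping, is to take the convex combination of these two identities with weights $\pr(E=0)$ and $\pr(E=1)$, respectively. Because $\pr(U=k) = \pr(E=1)\pr(U=k\mid E=1) + \pr(E=0)\pr(U=k\mid E=0)$, the two ``$r_k^*-r_k$'' terms merge into $\sum_k (r_k^* - r_k)\pr(U=k)$, which is exactly the left-hand side of Assumption 1 and therefore vanishes. What remains is
$$
\RD_{ED} = \sum_k \alpha_k \{ r_k^* \pr(E=0) + r_k \pr(E=1) \}.
$$
Selecting the weights so that the marginal of $U$ reappears is the key idea; any other weighting leaves a residual term that Assumption 1 cannot annihilate, so I expect this recognition to be the main obstacle, with everything else elementary algebra.

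Finally I would center the outcome probabilities at the reference level $U=0$. Since $\sum_k \alpha_k = \sum_k \{\pr(U=k\mid E=1) - \pr(U=k\mid E=0)\} = 0$, subtracting the constant $r_0^*\pr(E=0) + r_0\pr(E=1)$ inside the braces leaves the sum unchanged. Writing $r_k^* - r_0^* = \beta_{1k}$ and $r_k - r_0 = \beta_{0k}$ then gives $\RD_{ED} = \sum_{k} \alpha_k\{\beta_{1k}\pr(E=0) + \beta_{0k}\pr(E=1)\}$, and the $k=0$ term drops out because $\beta_{10} = \beta_{00} = 0$, so the sum runs over $k=1,\dots,K-1$ as claimed.
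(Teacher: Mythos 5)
Your proof is correct and follows essentially the same route as the paper: both reduce $\RD_{ED}$ to the intermediate identity $\sum_k \alpha_k\{r_k^*\,\pr(E=0) + r_k\,\pr(E=1)\}$ by combining the stratified expansion with the law of total probability so that Assumption 1 annihilates the term $\sum_k (r_k^* - r_k)\pr(U=k)$, and then center at $U=0$ using $\sum_k \alpha_k = 0$. The only difference is bookkeeping order --- the paper subtracts the equal standardized sums $\sum_k r_k^*\pr(U=k)$ and $\sum_k r_k\pr(U=k)$ first and then uses $\pr(U=k\mid E=1)-\pr(U=k)=\alpha_k\pr(E=0)$, whereas you introduce $\alpha_k$ first and recover $\pr(U=k)$ via the convex combination --- which is an algebraically equivalent rearrangement of the same steps.
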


\begin{proof}[of Lemma 1]
First, Assumption 1 is equivalent to 
$$
\sum_{k=0}^{K-1}     \pr(D=1\mid E=1, U=k) \pr(U=k)   = \sum_{k=0}^{K-1}   \pr(D=1\mid E=0, U=k)  \pr(U=k)  ,
$$
and therefore we have
\begin{eqnarray*}
\RD_{ED} 
&=& \sum_{k=0}^{K-1} \pr( D=1\mid E=1, U=k ) \pr(U=k\mid E=1)  \\
&&- \sum_{k=0}^{K-1} \pr(D=1\mid E=0, U=k) \pr(U=k\mid E=0)\\
&=&  \sum_{k=0}^{K-1} \pr( D=1\mid E=1, U=k ) \{   \pr(U=k\mid E=1) - \pr(U=k) \} \\
&& -\sum_{k=0}^{K-1} \pr(D=1\mid E=0, U=k) \{  \pr(U=k\mid E=0)  - \pr(U=k) \}  .
\end{eqnarray*}
Applying the law of total probability, we have that 
\begin{eqnarray*}
&&\pr(U=k\mid E=1) - \pr(U=k) \\
&=& \pr(U=k\mid E=1) - \pr(U=k\mid E=1)\pr(E=1) - \pr(U=k\mid E=0) \pr(E=0)  \\
&=& \{\pr(U=k\mid E=1) - \pr(U=k\mid E=0)  \} \pr(E=0) \\
&=& \alpha_k  \pr(E=0) ,
\end{eqnarray*}
and similarly, $\pr(U=k\mid E=0) - \pr(U=k) =  - \alpha_k  \pr(E=1) . $
Therefore, 
\begin{eqnarray*}
\RD_{ED} &=& \sum_{k=0}^{K-1} \alpha_k \pr(D=1\mid E=1, U=k) \pr(E=0) +  \sum_{k=0}^{K-1} \alpha_k \pr(D=1\mid E=0, U=k) \pr(E=1)\\
&=&\sum_{k=0}^{K-1} \alpha_k  \{    \pr(D=1\mid E=1, U=k) \pr(E=0) +    \pr(D=1\mid E=0, U=k) \pr(E=1)   \}.
\end{eqnarray*}
Using the fact that $\alpha_0 = -\sum_{k=1}^{K-1} \alpha_k$, we obtain that
\begin{eqnarray*}
\RD_{ED} &=&\sum_{k=1}^{K-1} \alpha_k  \{    \pr(D=1\mid E=1, U=k) \pr(E=0) +    \pr(D=1\mid E=0, U=k) \pr(E=1)   \} \\
&&- \sum_{k=1}^{K-1} \alpha_k  \{    \pr(D=1\mid E=1, U=0) \pr(E=0) +    \pr(D=1\mid E=0, U=0) \pr(E=1)   \} \\
&=& \sum_{k=1}^{K-1} \alpha_k  \{  \beta_{1k} \pr(E=0) + \beta_{0k} \pr(E=1) \}.
\end{eqnarray*}
\end{proof}

\begin{proof}[of Theorem 1]
For a binary confounder $U$ with $K=2$, we have
\begin{eqnarray*}
\RD_{ED} &=& \alpha_1 \{ \beta_{11} \pr(E=0) + \beta_{01} \pr(E=1) \} \\
&=& \RD_{EU} \{   \RD_{UD|E=1} \pr(E=0) + \RD_{UD|E=0}\pr(E=1)    \} .
\end{eqnarray*}

Since $\RD_{ED} \geq 0$ and $\RD_{EU} \geq 0$, we have $ \RD_{UD|E=1} \pr(E=0) + \RD_{UD|E=0}\pr(E=1)  \geq 0$. Evidently,
it is impossible that both $ \RD_{UD|E=1} $ and $ \RD_{UD|E=0}$ are negative. When $\RD_{UD|E=1}>0$ and $ \RD_{UD|E=0}>0$, we have 
$\RD_{UD|E=1} \pr(E=0) + \RD_{UD|E=0}\pr(E=1) < \max(  \RD_{UD|E=1}  ,  \RD_{UD|E=0} )$. When $\RD_{UD|E=1}>0$ and $\RD_{UD|E=0}<0$, we have $\RD_{UD|E=1} \pr(E=0) + \RD_{UD|E=0}\pr(E=1) <\RD_{UD|E=1}  =  \max(  \RD_{UD|E=1}  ,  \RD_{UD|E=0} )$. When 
$\RD_{UD|E=1} < 0$ and $ \RD_{UD|E=0} > 0$, we also have $\RD_{UD|E=1} \pr(E=0) + \RD_{UD|E=0}\pr(E=1)  <   \max(  \RD_{UD|E=1}  ,  \RD_{UD|E=0} )$. Therefore,
$$
 \RD_{ED}  \leq  \RD_{EU}  \times  \max( \RD_{UD|E=1} ,  \RD_{UD|E=0} ) , 
$$
which implies that
\begin{eqnarray*}
\min\left\{    \RD_{EU} ,    \max(  \RD_{UD|E=1}  ,  \RD_{UD|E=0}  )  \right\}  &\geq& \RD_{ED},\\
\max \left\{    \RD_{EU} ,   \max(  \RD_{UD|E=1} ,  \RD_{UD|E=0}  ) \right\}  &\geq&  \RD_{ED}^{1/2}.
\end{eqnarray*}
\end{proof}

\begin{proof}[of Theorem 2]
Since 
\begin{eqnarray*}
\RD_{ED} &=& \Big|  \sum_{k=1}^{K-1} \alpha_k \{  \beta_{1k} \pr(E=0) + \beta_{0k}\pr(E=1)  \} \Big|  \\
&\leq& A   \sum_{k=1}^{K-1}  |  \beta_{1k} \pr(E=0) + \beta_{0k}\pr(E=1)  |  \\
&\leq &   A\sum_{k=1}^{K-1}  \max( | \beta_{1k}|, |  \beta_{0k}|  )      \leq A(K-1),
\end{eqnarray*}
we have $A\geq \RD_{ED}/(K-1)$. The equality is attainable if and only if (c1) $\alpha_k=\RD_{ED}/(K-1)$, and $ \beta_{1k} = \beta_{0k} = 1$ for $k=1,\ldots,(K-1)$; or (c2) $\alpha_k=-1$, and $\beta_{1k} = \beta_{0k} = -1$ for $k=1,\ldots,K.$
The condition (c1) requires that the risk difference of the exposure $E$ on each category of $U$ to be the same as $\RD_{ED}/(K-1)$, and the confounder $U$ is a perfect predictor of the disease $D$. Similar interpretation applies to condition (c2).

Since 
\begin{eqnarray*}
\RD_{ED} &=&  \Big|  \sum_{k=1}^{K-1} \alpha_k \{  \beta_{1k} \pr(E=0) + \beta_{0k}\pr(E=1)  \} \Big| \\
&\leq &    \sum_{k=1}^{K-1}  | \alpha_k |   \max( | \beta_{1k}|, |  \beta_{0k}|  )    \leq B \sum_{k=1}^{K-1} |\alpha_k| \\
& \leq& B \sum_{k=1}^{K-1} \pr(U=k\mid E=1) + B  \sum_{k=1}^{K-1} \pr(U=k\mid E=0) \leq 2B,
\end{eqnarray*}
the lower bound for $B$ is $B\geq \RD_{ED}/2.$
The equality is attainable if and only if $\pr(U=0\mid E=0) = \pr(U=0\mid E=1)= 0, \pr(U=k\mid E=1)  \pr(U=k\mid E=0) = 0$ for $k=1,...,(K-1)$, and $\beta_{1k} = \beta_{0k} = \pm \RD_{ED}/2$ with the same sign as $\alpha_k$.

Since $\RD_{ED} \leq (K-1) AB \leq (K-1)\max^2(A,B)$, we have $\max(A,B)\geq \{ \RD_{ED}/(K-1) \}^{1/2}$, with the equality attainable if and only if $\alpha_k = \beta_{1k}  = \beta_{0k} = \pm \{ \RD_{ED}/(K-1) \}^{1/2}$ for $k=1,\ldots,K-1$. Due to the constraint $\sum_{k=1}^{K-1} | \alpha_k | \leq 2$ discussed above, the equality is attainable if and only if $(K-1)\{ \RD_{ED}/(K-1) \}^{1/2} \leq 2$ or $(K-1) \RD_{ED}\leq 4$. When $(K-1)\RD_{ED} > 4$, $B$ can attain its lower bound $\RD_{ED}$ with $\sum_{k=1}^{K-1} |\alpha_k| = 2.$ Therefore, $A$ can attain its lower bound $2/(K-1)$, which, in this case, is smaller than $\RD_{ED}/2.$
In summary, the lower bound for $\max(A,B)$ is $\max(A,B) \geq \{  \RD_{ED} / (K-1) \}^{1/2},$ if $(K-1)\RD_{ED} \leq  4$, and $\max(A,B) \geq \RD_{ED} / 2$, if $(K-1)\RD_{ED} > 4$. Equivalently, we have $\max(A,B) \geq \max\left[   \{\RD_{ED}/(K-1) \}^{1/2} , \RD_{ED}/2 \right] .$
\end{proof}

\begin{proof}[of Theorem 3]
The bound for $A$ remains the same.
Since 
\begin{eqnarray*}
\RD_{ED} &=& \Big|  \sum_{k=1}^{K-1} \alpha_k \{  \beta_{1k} \pr(E=0) + \beta_{0k}
(E=1)  \} \Big| \\
&\leq& B \sum_{k=1}^{K-1} |\alpha_k| \leq B (-\alpha_0) \leq B,
\end{eqnarray*}
the lower bound for $B$ is $B\geq \RD_{ED}$
The equality is attainable if and only if $\alpha_0 = -1$ and $\beta_{1k}= \beta_{0k} = \RD_{ED}$ for $k=1,\ldots,K-1$. The condition requires that the presence or absence of the confounder $U$ is perfectly predictive to the exposure $E$, and each category of $U$ is equally predictive to the disease $D$.

Since $\RD_{ED} \leq (K-1) AB \leq (K-1)\max^2(A,B)$, we have $\max(A,B)\geq \{ \RD_{ED}/(K-1) \}^{1/2}$, with the equality attainable if and only if $\alpha_k = \beta_{1k}  = \beta_{0k} = \pm \{ \RD_{ED}/(K-1) \}^{1/2}$ for $k=1,\ldots,K-1$. Due to the constraint $\sum_{k=1}^{K-1}  \alpha_k = -\alpha_0  \leq 1$ discussed above, the equality is attainable if and only if $(K-1)\{ \RD_{ED}/(K-1) \}^{1/2} \leq 1$ or $(K-1) \RD_{ED}\leq 1$. When $(K-1)\RD_{ED} > 1$, $B$ can attain its lower bound $\RD_{ED}$ with $\sum_{k=1}^{K-1} \alpha_k = 1.$ Therefore, $A$ can attain its lower bound $1/(K-1)$, which, in this case, is smaller than $\RD_{ED}.$
In summary, the lower bound for $\max(A,B)$ is $\max(A,B) \geq \{ \RD_{ED} / (K-1) \}^{1/2},$ if $(K-1)\RD_{ED} \leq  1$, and $\max(A,B) \geq \RD_{ED} $, if $(K-1)\RD_{ED} > 1$. Equivalently, we have $\max(A,B) \geq \max\left[    \{ \RD_{ED}/(K-1) \}^{1/2} , \RD_{ED} \right].$
\end{proof}

\end{document}